\newtheorem{theorem}{Theorem}
\newenvironment{proof}[1][Proof]{\textbf{#1.} }{\ \rule{0.5em}{0.5em}}
\begin{document}

\title{Hospitalization in the transmission of dengue dynamics: The impact on public health policies\\[2cm]
 Hospitalizaciones en la din\'amica de transmisi\'on del dengue: El impacto en pol\'iticas de salud p\'ublica  }
\author{
Fabio Sanchez\thanks{CIMPA-Escuela de Matem\'atica, Universidad de Costa Rica, San Jos\'e, Costa Rica. 
E-Mail: \href{mailto: fabio.sanchez@ucr.ac.cr}{fabio.sanchez@ucr.ac.cr}}
\and 
{\sc Jorge Arroyo-Esquivel}\thanks{Department of Mathematics, University of California Davis, California, USA.  
E-Mail: \href{mailto: jarroyoe@ucdavis.edu}{jarroyoe@ucdavis.edu}}\and
{\sc Paola V\'asquez}\thanks{Escuela de Salud P\'ublica, Universidad de Costa Rica, San Jos\'e, Costa Rica.  
E-Mail: \href{mailto: paola.vasquez@ucr.ac.cr}{paola.vasquez@ucr.ac.cr}}
} 


\maketitle

\begin{abstract}
Dengue virus has caused major problems for public health officials for decades in tropical and subtropical countries. We construct a compartmental model that includes the risk of hospitalization and its impact on public health policies. The {\it basic reproductive number}, $\mathcal{R}_0$, is computed, as well as a sensitivity analysis on $\mathcal{R}_0$ parameters and discuss the relevance in public health policies. The local and global stability of the disease-free equilibrium is established. Numerical simulations are performed to better determine future prevention/control strategies.
\end{abstract}


\section{Introduction}
Dengue is a vector-borne viral infection that inflicts substantial health, economic, and social burden to more than 100 countries in tropical and subtropical regions around the world \cite{brady2012refining}. Globalization, climate change, unplanned urbanization, and insufficient mosquito control programs \cite{gubler2002epidemic,murray2013epidemiology}, are among the complex factors that have contributed to the geographic expansion and rise in the global incidence of a disease, that is now causing an estimated 390 million infections annually, of which an approximated 96 million have clinical manifestations \cite{bhatt2013}.

The highly anthropophilic \textit{Aedes aegyti} mosquito, is the predominant vector \cite{ponlawat2005} of the four antigenically-distinct, but closely related, dengue viruses (DENV-1, DENV-2, DENV-3, and DENV-4) \cite{gubler1995dengue}, while \textit{Aedes albopictus} is considered secondary and less efficient in urban settings \cite{rezza2012}. Each one of these serotypes generates a unique host immune response to the infection \cite{simmons2012}, where infection with a particular serotype will confer lifelong immunity to that strain, but only short-term cross-immunity between serotypes \cite{reich2013}. In areas where dengue is endemic, different serotypes often co-circulate, therefore,  infections with heterologous DENV serotypes are common and can lead to an increased risk of more severe clinical manifestations \cite{ohainle2011}. However, the broad clinical spectrum of dengue infections \cite{world2009dengue} depends not only in the individual DENV infection history, but in a variety of factors like age of the human host \cite{kittigul2007}, host genetic susceptibility \cite{Carvalho2017}, possible chronic diseases \cite{world2009dengue} as well as, in the specific DENV serotype and genotype causing the infection \cite{Yung2015}. 

After the bite of an infected mosquito, a large proportion of the individuals will be asymptomatic, and a silent reservoir of crucial importance in the dynamics of dengue spread \cite{duong2015}. Individuals that do experience symptoms the most common outcome is a self-limiting febrile illness that does not progress to more severe forms, and do not need admission to a hospital. However, many of them will be unable to develop their every day activities, resulting in a major economic burden \cite{castro2017}. Those that evolve, and require hospitalization show a variety of warning signs as constant and intense abdominal pain, persisting vomiting, ascitis, pleural o pericardial effusion, mucosal bleeding, lethargy, lipothymia, hepatomegaly, and progressive increase in the hematocrit \cite{world2012dengue}. The disease can also progress to hemorrhagic and shock complications that can lead to death of the patient, although rare \cite{world2009dengue}. It is estimated that each year, approximately 500,000 cases require hospitalization \cite{denguebulletin}, with the number of deaths globally increasing by 65.5\% between 2007 and 2017 (from 24,500 to 40,500 deaths) \cite{roth2018global}. 


In Costa Rica, dengue has been a significant public health challenge since 1993, when after more than 30 years of absence, autochthonous cases were reported on the Pacific coast \cite{morice2010}. Since then, and despite efforts made, dengue infections have been documented annually, with peaks of transmission observed seasonally (within the year) and cyclical every 2-5 years. A total of 376,158 clinically suspected and confirmed cases \cite{mscr} have been reported to the Ministry of Health, of which  more than 45,000 cases have required hospital care \cite{mscr,ccss}. DENV-1, DENV-2 and DENV-3 have circulated the country in different moments and 1,196 of the total of cases, have been classified as severe dengue, which lead to 23 deaths \cite{mscr}.
 
Given the complexity involved in the transmission of vector borne diseases, such as dengue, mathematical models play a significant role to better understand the interactions between the vector, the virus and the host. In this article we develop a compartmental model that analyses the effect of discriminating the hospitalized (diagnosed) infected individuals and its effectiveness on the overall behavior of the dynamics of dengue fever, which can provide information for public health authorities to implement better prevention and control approaches. 

The article is divided in the following sections. In Section \ref{sec:model}, we introduce a compartmental model that describes the transmission dynamics of dengue fever; Section \ref{sec:results}, presents the results of the model; in Section \ref{sec:disc}, we give a discussion and concluding remarks.

\section{Model with risk of hospitalization} \label{sec:model}
We introduce a model that describes the dynamics of dengue between hosts and vectors in Costa Rica that includes the role of hospitalizations. According to data provided by the Costa Rican Social Security Fund (CCSS), during the last two decades, a total of 45,577 patients with DENV infections, have required hospital care services \cite{ccss}, which represents 13\% of the total of reported confirmed and suspected cases reported during that same period. In Figure \ref{fig:map}, we illustrate the concentration of hospitalizations due to the DENV in the country. As seen in the map, the vast majority of hospitalized patients were reported from regions near the coasts, where temperature is ideal for mosquito prevalence and with circulation of the other two arboviruses, zika and chikungunya. Lim\'on, a province located in the Caribbean coast, reported a total of 17,894 hospitalized cases, follow by Guanacaste with 12,233 cases and Puntarenas with 8,244 cases, both of them located in the Pacific coast. Patients in the age group of 20-44 years of age represented the 43.8\% of the total of hospitalized patients and no significant difference between men and women was observed \cite{ccss}. 

\begin{figure} [htb]
\centering
\subfloat[Dengue hospitalizations by region, Costa Rica 1997-2017.]{%
\resizebox*{5cm}{!}{\includegraphics{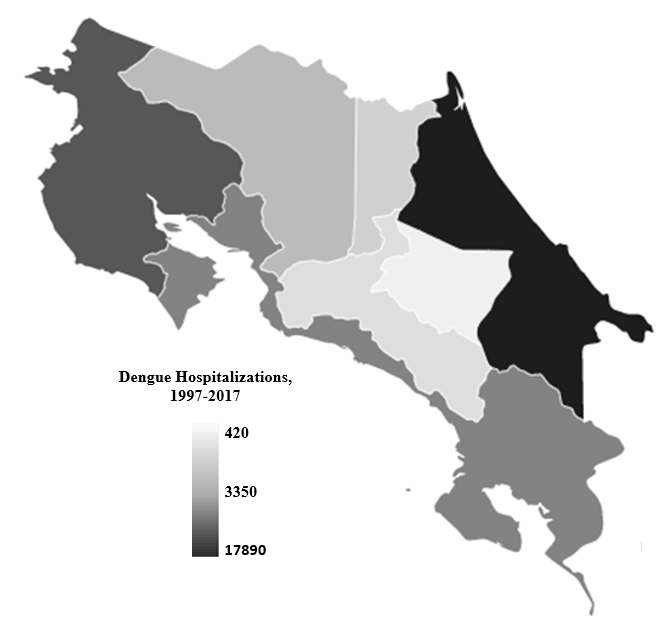}}}\hspace{5pt}
\subfloat[Dengue hospitalizations by sex and age group, Costa Rica 1997-2017.]{%
\resizebox*{7cm}{!}{\includegraphics{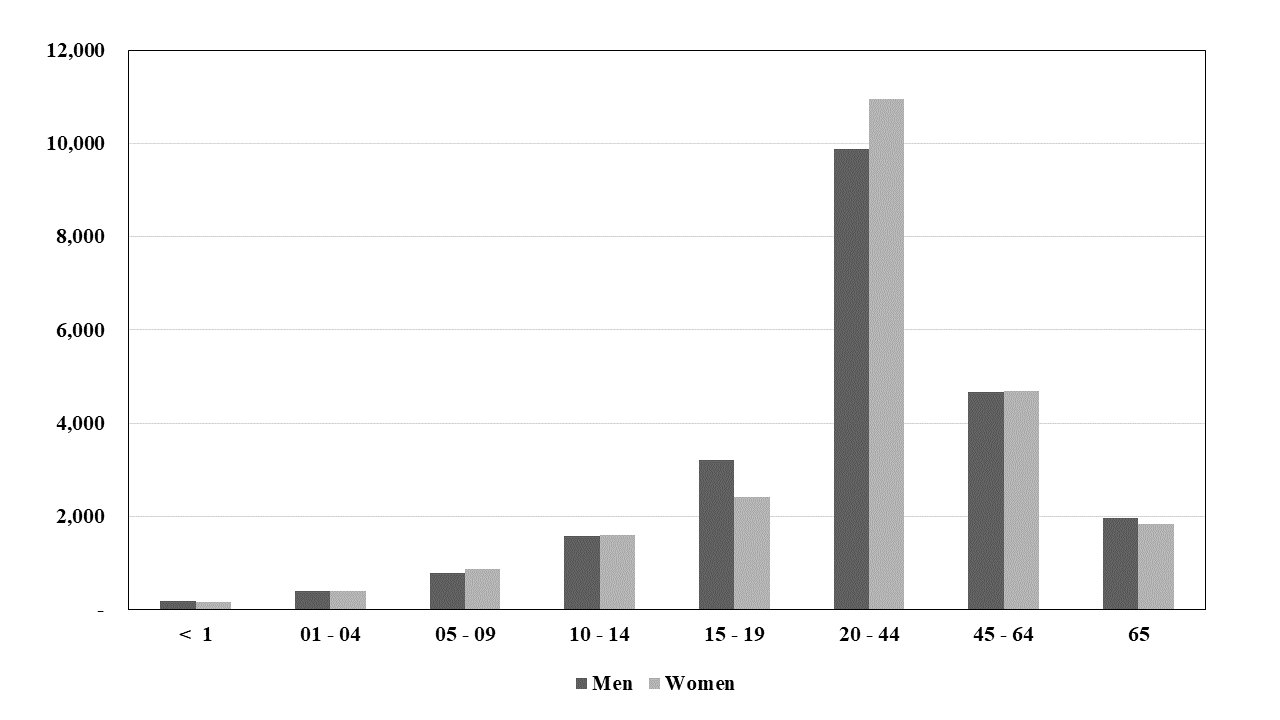}}}
\caption{In Costa Rica, the vast majority of hospitalizations were reported in the coastal regions. These areas, are characterized by been highly infested by the \textit{Aedes} mosquito and by having the higher incidence of dengue cases throughout the year \cite{ccss}.} 
\label{fig:map}
\end{figure}

Model state variables are presented in \Cref{tb:vars}. The following is the system of nonlinear differential equations:

\begin{equation}
\begin{array}{rcl}
S_h^\prime &=& \mu_h N_h - \beta_{hv} S_h \frac{I_v}{N_v}-\mu_h S_h,\\
E_h^\prime &=&  \beta_{hv} S_h \frac{I_v}{N_v} - (\mu_h+\alpha_h)E_h,\\
I_h^\prime &=& (1-p)\alpha_h E_h - (\mu_h+\gamma+\delta)I_h,\\ 
H^\prime &=& p\alpha_h E_h + \delta I_h - (\mu_h+\gamma)H,\\
R^\prime &=& \gamma I_h + \gamma H - \mu_h R,\\
S_v^\prime &=& \mu_v N_v- \beta_{vh} S_v \frac{(I_h+(1-\eta)H)}{N_h-\eta H}-\mu_v S_v,\\
E_v^\prime &=& \beta_{vh} S_v \frac{(I_h+(1-\eta)H)}{N_h-\eta H}-(\mu_v+\alpha_v)E_v,\\
I_v^\prime &=& \alpha_v E_v - \mu_v I_v,\\
\end{array}
\end{equation}

\noindent where $N_h=S_h+E_h+I_h+H+R$ and $N_v=S_v+E_v+I_v$ and each variable is described in Table \ref{tb:vars} and their respective parameters in Table \ref{tb:pars}. The model diagram is presented in Figure \ref{fig:model}.

\begin{table}[!htb]
    \caption{Model variables.}
    \label{tb:vars}
      \centering
        \begin{tabular}{|c|c|} \hline
            State Variable & Description \\ \hline 
            $S_h$   & Susceptible individuals\\ 
            $E_h$   & Exposed individuals (infected but not infectious)\\ 
            $I_h$    & Infected individuals\\ 
            $H$       & Hospitalized individuals\\ 
            $R$   & Recovered individuals\\ 
            $S_v$   & Susceptible vectors\\ 
            $E_v$   & Exposed vectors (infected but not infectious)\\ 
            $I_v$    & Infected vectors\\ \hline
       \end{tabular}
\end{table}
\begin{table}[!htb]
    \caption{Model parameters and description.}
    \label{tb:pars}
      \centering
        \begin{tabular}{|c|c|}\hline
            Parameter & Description \\ \hline 
            $\beta_{hv}$ & Transmission rate (host-vector)\\
            $\beta_{vh}$ & Transmission rate (vector-host)\\
            $\alpha_h$ & Per capita exposed rate of humans\\
            $\alpha_v$ & Per capita exposed rate of vectors\\
            $\delta$      & Per capita hospitalization rate after infection (undiagnosed)\\
            $\gamma$  & Per capita recovery rate of humans\\
            $p$             & Proportion of individuals being hospitalized (diagnosed)\\
            $\eta$         & Effectiveness of hospitalization (dimensionless)\\
            $\mu_h$     & Per capita mortality rate of humans\\
            $\mu_v$     & Per capita mortality rate of vectors\\ \hline
        \end{tabular}
\end{table}

\begin{figure}[h]
\centering
\includegraphics[width=0.8\textwidth]{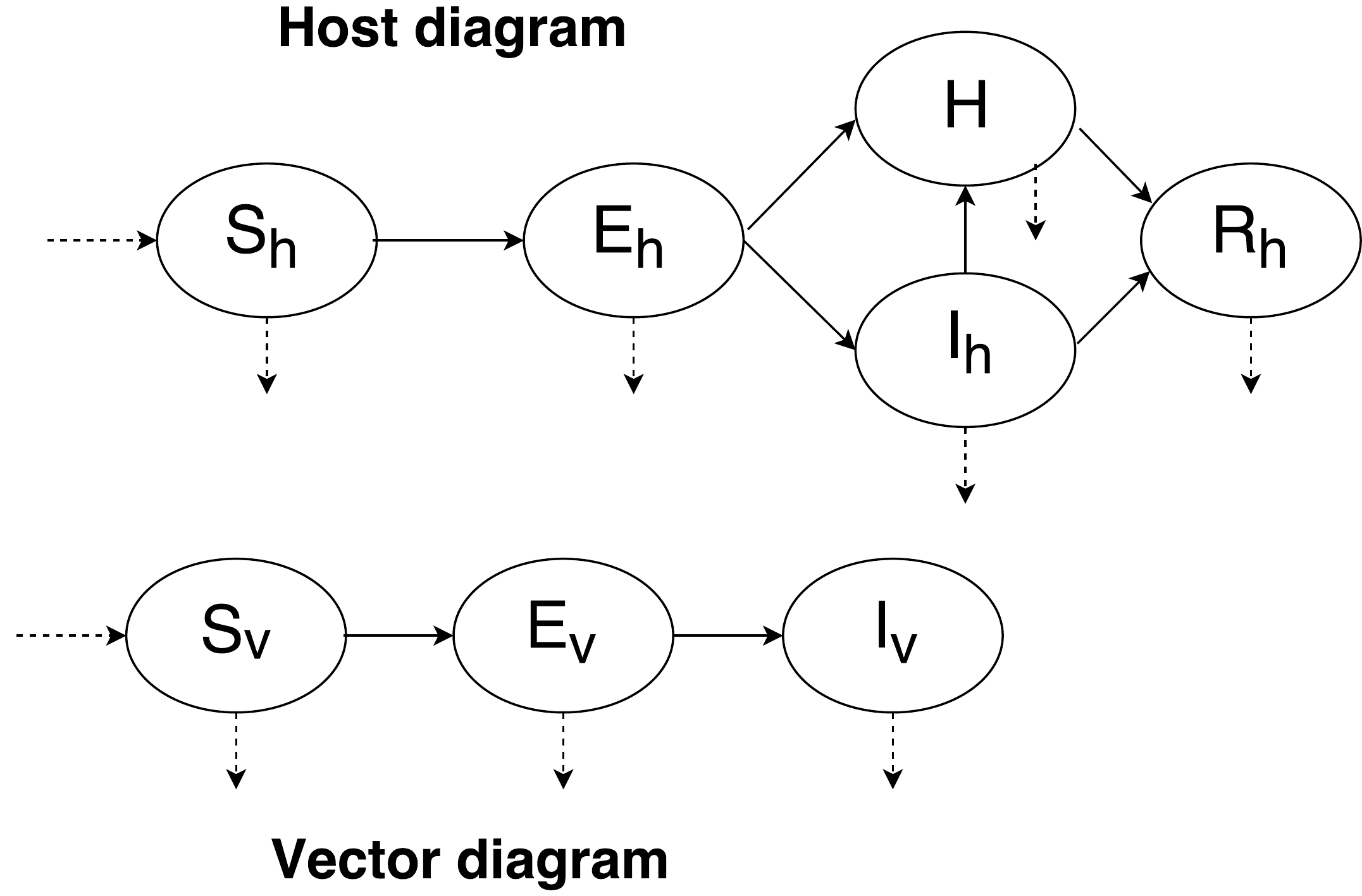}
\caption{Model diagram.}
\label{fig:model}
\end{figure}

We assume the host and mosquito remain constant in time, moreover we can have the following re-scaled variables, $s_h=\frac{S_h}{N_h}$, $e_h=\frac{E_h}{N_h}$, $i_h=\frac{I_h}{N_h}$, $\hslash=\frac{H}{N_h}$, $r_h=\frac{R}{N_h}$, $s_v=\frac{S_v}{N_v}$, $e_v=\frac{E_v}{N_v}$, $i_v=\frac{I_v}{N_v}$. Hence, the re-scaled system becomes:

\begin{equation}
\label{eq:resys}
\begin{array}{rcl}
s_h^\prime &=& \mu_h - \beta_{hv} s_h i_v - \mu_h s_h,\\
e_h^\prime &=&  \beta_{hv} s_h i_v - (\mu_h+\alpha_h)e_h,\\
i_h^\prime &=& (1-p)\alpha_h e_h - (\mu_h+\gamma+\delta)i_h,\\ 
\hslash^\prime &=& p\alpha_h e_h + \delta i_h - (\mu_h+\gamma)\hslash,\\
r^\prime &=& \gamma i_h + \gamma \hslash - \mu_h r,\\
s_v^\prime &=& \mu_v - \beta_{vh} s_v \frac{(i_h+(1-\eta)h)}{1-\eta h}-\mu_v s_v,\\
e_v^\prime &=& \beta_{vh} s_v \frac{(i_h+(1-\eta)h)}{1-\eta h}-(\mu_v+\alpha_v)e_v,\\
i_v^\prime &=& \alpha_v e_v - \mu_v i_v,\\
\end{array}
\end{equation}

\noindent where $s_h+e_h+i_h+\hslash+r=1$ and $s_v+e_v+i_v=1$. 


\begin{theorem} \label{eq:thm1}
The closed set $\Omega = \{(s_h,e_h,i_h,\hslash,r,s_v,e_v,i_v) \in \Re^{8}_{+} : 0 < s_h+e_h+i_h+\hslash+r \le 1, \;\; 0 < s_v+e_v+i_v \le 1\}$ is positively invariant for model (\ref{eq:resys}) and is absorbing.
\end{theorem}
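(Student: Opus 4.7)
The plan is to prove the two claims -- positive invariance and absorbingness -- in two stages. First, I would establish that nonnegativity of each state variable is preserved by the flow of (\ref{eq:resys}). Second, I would derive scalar ODEs for the total host and vector populations $N_h := s_h+e_h+i_h+\hslash+r$ and $N_v := s_v+e_v+i_v$, solve them explicitly, and read off from the solutions both the invariance of the $\le 1$ bounds and the absorbing property.

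For the first stage, I would invoke the standard tangent criterion: on each face of the nonnegative orthant, the vector field points inward. Concretely, for each coordinate I would check that setting it to zero (with all others nonnegative) yields a nonnegative derivative. For instance, $s_h=0$ gives $s_h'=\mu_h>0$; $e_h=0$ gives $e_h'=\beta_{hv}s_h i_v\ge 0$; $i_h=0$ gives $i_h'=(1-p)\alpha_h e_h\ge 0$; $\hslash=0$ gives $\hslash'=p\alpha_h e_h+\delta i_h\ge 0$; and the remaining checks for $r$, $s_v$, $e_v$, $i_v$ are analogous. One item worth verifying here is that the vector force-of-infection $\beta_{vh}s_v(i_h+(1-\eta)\hslash)/(1-\eta\hslash)$ is well-defined and nonnegative on $\Omega$: since $\hslash\le 1$ and $\eta\in[0,1)$, the denominator is bounded below by $1-\eta>0$, so no singularity or sign change can occur inside $\Omega$.

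For the second stage, I would add the five host equations. Every internal coupling term cancels (the $\beta_{hv}s_h i_v$ transfer between $s_h$ and $e_h$, the $\alpha_h e_h$ split between $i_h$ and $\hslash$ through the factors $(1-p)$ and $p$, the $\delta i_h$ transfer into $\hslash$, and the $\gamma$ recovery fluxes into $r$), leaving the simple balance $N_h'=\mu_h(1-N_h)$. Summing the three vector equations similarly cancels the $\beta_{vh}$ and $\alpha_v$ couplings to give $N_v'=\mu_v(1-N_v)$. These are decoupled linear scalar ODEs whose solutions are
\begin{equation*}
N_h(t)=1+(N_h(0)-1)e^{-\mu_h t},\qquad N_v(t)=1+(N_v(0)-1)e^{-\mu_v t}.
\end{equation*}
From these explicit formulas I would read off the two conclusions: if $N_h(0)\in(0,1]$ then $N_h(t)\in(0,1]$ for all $t\ge 0$ (and likewise for $N_v$), so $\Omega$ is positively invariant; and if $N_h(0)>1$ then $N_h(t)$ decreases monotonically toward $1$, so any trajectory starting outside $\Omega$ enters $\Omega$ asymptotically, yielding the absorbing property.

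The only real obstacle is careful bookkeeping in verifying that all cross-compartment transfers cancel in $N_h'$ and $N_v'$ -- in particular the $\beta_{vh}$-incidence term, which in the unscaled model carries the unusual denominator $N_h-\eta H$ and must be shown to cancel cleanly between $s_v'$ and $e_v'$ after rescaling -- together with checking that the rescaled incidence stays finite and nonnegative on $\Omega$ via the uniform lower bound $1-\eta\hslash\ge 1-\eta>0$. Once these points are settled, the elementary analysis of the two uncoupled linear equations for $N_h$ and $N_v$ delivers the conclusion.
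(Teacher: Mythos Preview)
Your argument is correct and, for the nonnegativity part, is essentially the paper's own argument: the paper observes that each equation has the form $y'=f(x)-yg(x)$ with $f,g\ge 0$, so $y=0$ forces $y'\ge 0$, which is exactly your tangent-criterion check written abstractly. Where you diverge is in the handling of the upper bounds and the absorbing claim. The paper simply invokes the identities $s_h+e_h+i_h+\hslash+r=1$ and $s_v+e_v+i_v=1$ (stated just before the theorem) to conclude $y\le 1$ for every coordinate, and it does not explicitly treat the absorbing assertion at all. You instead sum the equations to obtain the closed scalar laws $N_h'=\mu_h(1-N_h)$ and $N_v'=\mu_v(1-N_v)$ and read off both invariance of $\{N\le 1\}$ and global attraction to $N=1$ from the explicit solutions. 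This buys you a genuine proof of the absorbing property (trajectories with $N_h(0)>1$ or $N_v(0)>1$ are driven toward the simplex), which the paper's proof leaves implicit; it also makes clear why the constraint $N=1$ is dynamically consistent rather than merely assumed. Your additional care about the well-definedness of the incidence term via $1-\eta\hslash\ge 1-\eta>0$ is likewise absent from the paper and is a worthwhile check.
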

\begin{proof}
Let $x_0\in\Omega$ be the initial state of System \ref{eq:resys}. To prove this theorem we will show that if $y=0$, then $y'\geq 0$, where $y$ is any variable of the model. Assume first that $y=0$, then notice that, from the model we have that $y' = f(x)-yg(x)$, where $x$ is the state of the system and $f,g$ are non-negative functions. Then it happens that $y'=f(x)\geq 0$ if $y=0$, therefore $y\geq 0$.

Now, since $s_h+e_h+i_h+\hslash+r=1$ and $s_v+e_v+i_v=1$ and from the previous step all variables are non-negative, then all variables $y$ satisfy that $y\leq 1$.

Therefore $\Omega$ is a positively invariant set for the model.
\end{proof}

\begin{theorem} \label{eq:thm2}
The system (\ref{eq:resys}) has exactly one equilibrium point when there is no disease in $\Omega \in \Re^{8}_{+}.$
\end{theorem}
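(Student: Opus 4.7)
My plan is to characterize the disease-free equilibrium (DFE) by setting every infection-related compartment to zero and then showing that the remaining equilibrium equations admit a unique solution in $\Omega$.

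First I would declare what ``no disease'' means in this model: the compartments that carry infection in either host or vector must vanish, i.e.\ $e_h=i_h=\hslash=e_v=i_v=0$. With these values substituted into System~(\ref{eq:resys}), most of the nonlinear terms drop out, so the remaining equilibrium conditions reduce to a small linear system. Specifically, from $r' = \gamma i_h + \gamma \hslash - \mu_h r = 0$ I immediately get $r=0$ (since $\mu_h>0$). From $s_h' = \mu_h - \beta_{hv} s_h i_v - \mu_h s_h = 0$ with $i_v=0$ I get $s_h = 1$, and symmetrically from $s_v' = 0$ with $i_h=\hslash=0$ I get $s_v=1$. This produces the candidate $(s_h,e_h,i_h,\hslash,r,s_v,e_v,i_v) = (1,0,0,0,0,1,0,0)$, which manifestly satisfies the two conservation identities $s_h+e_h+i_h+\hslash+r=1$ and $s_v+e_v+i_v=1$ and lies in $\Omega$ by Theorem~\ref{eq:thm1}.

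To finish, I would argue uniqueness by noting that the derivation above was forced: once ``no disease'' fixes the five infection compartments at zero, the three remaining equilibrium equations (for $s_h$, $s_v$, and $r$) each have a unique solution in $[0,1]$ because $\mu_h,\mu_v>0$ make the relevant linear maps invertible. Thus the DFE in $\Omega$ exists and is unique.

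The only delicate point I anticipate is purely a matter of exposition rather than mathematics: making precise what ``no disease'' formally means in the statement, since the theorem as written does not spell out which compartments are set to zero. I would resolve this by taking ``disease-free'' in its standard epidemiological sense (all exposed, infectious, and hospitalized classes equal zero) so that the argument above delivers a single, explicit equilibrium point in $\Omega$.
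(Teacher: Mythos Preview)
Your argument is correct: once the infection compartments are fixed at zero, the remaining three equilibrium equations are linear with strictly positive coefficients and therefore determine $s_h$, $r$, and $s_v$ uniquely, yielding the point $(1,0,0,0,0,1,0,0)\in\Omega$. This is a valid proof of the theorem as stated.

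However, your route is genuinely different from the paper's. The paper does not impose the disease-free condition up front; instead it sets all right-hand sides of System~(\ref{eq:resys}) to zero and, through a sequence of algebraic eliminations, reduces the full equilibrium problem to a single scalar equation in $e_h^*$. That equation factors and yields exactly two roots, $e_h^*=0$ and a second (endemic) value; all other components are then expressed as explicit functions of $e_h^*$, so the root $e_h^*=0$ recovers the same DFE you found. Your approach is shorter and more transparent for the theorem actually being proved, and it sidesteps the interpretive question you flag by making ``no disease'' a hypothesis rather than a conclusion. The paper's longer computation, on the other hand, simultaneously establishes the structure of \emph{all} equilibria and produces an explicit formula for the endemic state, which is information the paper uses implicitly later; that extra yield is the price one pays for the heavier algebra.
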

\begin{proof}
The local equilibria are the vectors $(s_h^*,e_h^*,i_h^*,\hslash^*,r^*,s_v^*,e_v^*,i_v^*)$ such that all of the derivatives in System \ref{eq:resys} are equal to $0$. Some simple algebra leads us to the following relation between exposed individuals on each population:
$$e_h^* = \frac{\beta_h\mu_h\alpha_v}{\mu_h+\alpha_h}\frac{e_v^*}{\mu_h\mu_v+\beta_h\alpha_v e_v^*},$$

where, if
$$\hat{H} = \frac{i_h^*(1-\eta)\hslash^*}{1-\eta\hslash^*},$$

then
$$\frac{e_v^*}{\mu_h\mu_v+\beta_h\alpha_v e_v^*}, = \frac{\beta_v\mu_v}{\mu_v+\alpha}\frac{\hat{H}}{\mu_v+\hat{H}\beta_v}.$$

To turn this system into a single variable equation, notice that $i_h^*$ and $\hslash^*$ satisfy the following relationships:
\begin{align*}
\hslash^* &= \frac{\phi\alpha_he_h^*+\delta i_h^*}{\mu_h+\gamma},\\
i_h^* &= \frac{(1-\phi)\alpha_he_h^*}{\mu_h+\gamma+\delta}.
\end{align*}

By making those substitutions on $\hat{H}$ and simplifying, we get that:
$$\frac{\hat{H}}{\mu_v+\hat{H}\beta_v} = \frac{e_h^*\Delta_i}{M+e_h^*(\Delta_i-\Delta_v)},$$

where:
\begin{align*}
\Delta_i &= (1-\phi)\alpha_h(\mu_h+\gamma)+(1-\eta)\phi\alpha_h(\mu_h+\gamma+\delta)+(1-\eta)\delta(1-\phi)\alpha_h,\\
\Delta_v &= \eta\mu_v\alpha_h(\mu_h+\gamma+\delta)+\eta\mu_v\delta(1-\phi)\alpha_h,\\
M &= \mu_v(\mu_h+\gamma)(\mu_h+\gamma+\delta).
\end{align*}

If we let
\begin{align*}
\Gamma_h &= \frac{\beta_h\mu_h\alpha_v}{\mu_h+\alpha_h},\\
\Gamma_v &= \frac{\beta_v\mu_v}{\mu_v+\alpha},
\end{align*}

then we get the following equation:
$$e_h^* = \frac{\Gamma_h\Gamma_v\Delta_i e_h^*}{M+e_h^*(\Delta_i-\Delta_v)},$$

which has the following two solutions:

\begin{equation}\label{equilibria}
\begin{split}
e_h^* =& 0,\\
e_h^* =& \frac{\Gamma_h\Gamma_v\Delta_i-M}{\Delta_i-\Delta_v},
\end{split}
\end{equation}

which correspond to the disease-free and endemic equilibrium points, respectively. If we let $e_h^*$ as any of those values, by simplifying the other equations we get that:
\begin{align*}
s_h^* &= \frac{1}{k_se_h^*+1},\\
i_h^* &= k_ie_h^*,\\
\hslash^* &= k_\hslash e_h^*,\\
r^* &= k_re_h^*,\\
s_v^* &= \frac{1}{1+\kappa_se_h^*},\\
e_v^* &= \kappa_ee_h^*,\\
i_v^* &= \kappa_ie_h^*.
\end{align*}

Where the $k$'s and $\kappa$'s are parameters. The case $e_h^* = 0$ shows the existence of the disease-free equilibrium point $(1,0,0,0,0,1,0,0)$.
\end{proof}

\section{Model results} \label{sec:results}
In this section we perform the following analyses for System \ref{eq:resys}: $\mathcal{R}_0$ calculation, global equilibria of disease free equilibrium, sensitivity analysis, and some numerical simulations. 

\subsection{Basic reproductive number}
We compute the {\it basic reproductive number}, $\mathcal{R}_0$, using the next generation operator~\cite{diekmann1990}. We compute the $F$ matrix where the entries include the transmission terms for both, the host and vector populations,

\[F=
\begin{bmatrix}    
    0  & 0 & 0 & 0 & \beta_{hv} \\
    0  & 0 & \beta_{vh}  & \beta_{vh} (1-\eta) & 0 \\
    0  & 0 & 0 & 0 & 0 \\
    0  & 0 & 0 & 0 & 0 \\
    0  & 0 & 0 & 0 & 0 
\end{bmatrix}
\]

and the $V$ matrix that includes the infectious periods of both, the host and vector populations.

\[V=
\begin{bmatrix}
    (\mu_h+\alpha_h)  & 0 & 0 & 0 & 0 \\
    0  & (\mu_v+\alpha_v) & 0 & 0 & 0 \\
    -\alpha_h(1-p)  & 0 & (\mu_h+\gamma+\delta) & 0 & 0 \\
    -p \alpha_h  & 0 & -\delta & (\mu_h+\gamma) & 0 \\
    0  & \alpha_v & 0 & 0 & \mu_v 
\end{bmatrix}
\]

We then find $V^{-1}$ and compute $FV^{-1}$, where $\rho$ represents the dominant eigenvalue, and hence, the {\it basic reproductive number} is given by:

$$\mathcal{R}_0=\sqrt{\frac{\beta_{hv}}{(\mu_h+\gamma+\delta)}\frac{\beta_{vh}}{\mu_v}\frac{\alpha_h}{(\mu_h+\alpha_h)}\frac{\alpha_v}{(\mu_v+\alpha_v)}\left[ \frac{\delta(1-\eta)}{(\mu_h+\gamma)}+(1-\eta p) \right]}.$$

And can be broken down and interpreted as in Table \ref{tb:r0}.

\begin{table}[!htb]
    \caption{{\it Basic reproductive number} factors.}
     \label{tb:r0}
      \centering
        \begin{tabular}{|c|c|}\hline
            Number & Description \\ \hline 
            $\beta_{hv}$ 					 	  & Transmission rate (host-vector)\\ 
            $\frac{\alpha_h}{\mu_h+\alpha_h}$ 	  & Probability an individual survives the exposed period\\ 
            $\frac{1}{\mu_h+\gamma+\delta}$  	  & Average human infectious period\\ 
            $\beta_{vh}$ 					 	  & Transmission rate (vector-host)\\ 
            $\frac{\alpha_v}{\mu_v+\alpha_v}$ 	  & Probability a vector survives the exposed period\\ 
            $\frac{1}{\mu_v}$ 					  & Average vector infectious period\\ \hline
        \end{tabular}
\end{table}

We define,
$$\mathcal{R}_u = \frac{\beta_{hv}}{(\mu_h+\gamma+\delta)}\frac{\beta_{vh}}{\mu_v}\frac{\alpha_h}{(\mu_h+\alpha_h)}\frac{\alpha_v}{(\mu_v+\alpha_v)}(1-\eta p)$$
as the contribution of individuals that are infected and undiagnosed.
And
$$\mathcal{R}_d = \frac{\beta_{hv}}{(\mu_h+\gamma+\delta)}\frac{\beta_{vh}}{\mu_v}\frac{\alpha_h}{(\mu_h+\alpha_h)}\frac{\alpha_v}{(\mu_v+\alpha_v)}\frac{\delta (1-\eta)}{(\mu_h+\gamma)}$$
as the contributions of individuals that are hospitalized and therefore diagnosed by default.

Therefore, $\mathcal{R}_0$ can be represented by the contributions of individuals that are undiagnosed and hospitalized (diagnosed), respectively. Hence,
$$\mathcal{R}_0 = \sqrt{\mathcal{R}_u+\mathcal{R}_d}.$$

\subsection{Global equilibria} \label{sec:global}
In this section we establish the global stability of the disease-free equilibrium using the methods developed in \cite{ccc2002}.
\begin{theorem}\label{thm:global}
The disease-free equilibrium of System \ref{eq:resys} is globally asymptotically stable if $\mathcal{R}_0<1$.
\end{theorem}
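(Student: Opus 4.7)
My plan is to apply the method of Castillo-Chavez, Feng, and Huang from \cite{ccc2002}, as suggested in the remark preceding the theorem. I would split the state into the uninfected compartments $X = (s_h, r, s_v)$ and the infected compartments $Z = (e_h, i_h, \hslash, e_v, i_v)$, writing System \ref{eq:resys} in the block form $X' = \mathcal{F}(X, Z)$ and $Z' = \mathcal{G}(X, Z)$ with $\mathcal{G}(X, 0) = 0$, so that the disease-free equilibrium becomes $U_0 = (X^*, 0)$ with $X^* = (1, 0, 1)$.

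First I would verify hypothesis (H1): $X^*$ is globally asymptotically stable for the reduced system $X' = \mathcal{F}(X, 0)$. Setting the infected variables to zero in System \ref{eq:resys} decouples the uninfected dynamics into the linear scalar equations $s_h' = \mu_h(1 - s_h)$, $r' = -\mu_h r$, and $s_v' = \mu_v(1 - s_v)$, each of which converges exponentially to the corresponding component of $X^*$.

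Next I would check hypothesis (H2) by writing $\mathcal{G}(X, Z) = A Z - \hat{\mathcal{G}}(X, Z)$, where $A = D_Z \mathcal{G}(U_0)$ has non-negative off-diagonal entries and $\hat{\mathcal{G}}(X, Z) \geq 0$ on $\Omega$. By construction $A$ coincides with $F - V$ from the next-generation decomposition already produced in Section \ref{sec:results}, so the off-diagonal sign pattern is automatic, and $\mathcal{R}_0 < 1$ is equivalent to $F - V$ being Hurwitz. The only non-trivial components of $\hat{\mathcal{G}}$ come from the two bilinear infection terms:
$$\hat{\mathcal{G}}_{e_h} = \beta_{hv}\, i_v\,(1 - s_h), \qquad \hat{\mathcal{G}}_{e_v} = \beta_{vh}\bigl(i_h + (1-\eta)\hslash\bigr)\left[1 - \frac{s_v}{1 - \eta\hslash}\right].$$

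The hard part will be controlling the second term, since host and vector populations are separate and $s_v$ can in principle exceed $1 - \eta\hslash$, rendering the bracket negative. To close the argument I would either (i) exploit the absorbing property from Theorem \ref{eq:thm1}, arguing that the $s_v$-equation forces $s_v \to 1$ only when $e_v, i_v \to 0$ and the $H$-equation simultaneously drives $\hslash$ to zero, or (ii) abandon the direct (H2) check and construct a Lyapunov function $L = c^\top Z$, where $c$ is the positive left Perron eigenvector of $V^{-1} F$. Using the bounds $s_h \le 1$ and $s_v/(1 - \eta\hslash) \le 1/(1 - \eta)$ available on $\Omega$, the derivative $L'$ can be estimated by $(\mathcal{R}_0^2 - 1)$ times a non-negative linear combination of the infected variables, so $L' < 0$ whenever $\mathcal{R}_0 < 1$ and $Z \neq 0$. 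LaSalle's invariance principle then identifies the DFE as the largest invariant set in $\{L' = 0\}$ and concludes the proof.
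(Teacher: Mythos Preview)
Your plan coincides with the paper's proof almost line for line: the same Castillo--Chavez--Feng--Huang decomposition with $X=(s_h,r,s_v)$ and $Z=(e_h,i_h,\hslash,e_v,i_v)$, the same verification of (H1) via the explicit linear decay to $X^*=(1,0,1)$, and the same matrix $A$ and remainder $\hat{\mathcal G}$ with nonzero entries only in the $e_h$- and $e_v$-rows. Where you hesitate, the paper does not: it simply asserts (with what appears to be a typo, writing $s_h$ in place of $s_v$) that $s_v/(1-\eta\hslash)\le 1$ on $\Omega$ and concludes $\hat{\mathcal G}\ge 0$ directly, without further argument.

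Your caution about that inequality is legitimate, since the vector simplex $s_v+e_v+i_v=1$ and the host simplex are independent constraints and nothing in Theorem~\ref{eq:thm1} forces $s_v\le 1-\eta\hslash$; nevertheless the paper offers no additional justification, so your proposed workarounds (i) and (ii) already go beyond what the published proof does. Be aware, though, that option (ii) as you sketched it will not deliver the sharp threshold: bounding $s_v/(1-\eta\hslash)$ by $1/(1-\eta)$ rather than by $1$ inflates the effective vector-infection coefficient in the $e_v$-row, so the resulting estimate on $L'$ would be controlled by a quantity strictly larger than $\mathcal R_0^2-1$ and would only guarantee $L'\le 0$ under a condition stronger than $\mathcal R_0<1$.
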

\begin{proof}
Let $X\in\mathbb{R}^n$ be the uninfected individuals and $Z\in\mathbb{R}^m$ the infected individuals in the system such that the System \ref{eq:resys} is rewritten as:

\begin{equation}\label{ccc}
\begin{split}
    \frac{dX}{dt} =& F(X,Z),\\
    \frac{dZ}{dt} =& G(X,Z).
\end{split}
\end{equation}

Then, if the three following conditions are met:

\begin{enumerate}
    \item $\mathcal{R}_0<1$
    \item For $\frac{dX}{dt} = F(X,0)$, the disease-free equilibrium $X^*$ is globally asymptotically stable.
    \item $G(X,Z) = AZ-\hat{G}(X,Z)$, where $A = G_Z(X^*,0)$ and $\hat{G}(X,Z)\geq 0$ for all $(X,Z)$ where the model makes sense.
\end{enumerate}

Then the disease-free equilibrium is globally asymptotically stable.

In this section we will prove that conditions (2) and (3) are met by our model. First, consider that $X = (s_h,r,s_v)$ and $Z = (e_h,i_h,\hslash,e_v,i_v)$, then:

\begin{align*}
    F(X,Z) &= \left(\begin{array}{c}
    \mu_h-\beta_{hv} s_h i_v - \mu_h s_h\\
    \gamma i_h+\gamma \hslash - \mu_h r\\
    \mu_v - \beta_{vh} s_v \frac{i_h+(1-\eta)\hslash}{1-\eta \hslash} - \mu_v s_v
\end{array}\right),\\
G(X,Z) &= \left(\begin{array}{c}
    \beta_{hv} s_h i_v - (\mu_h+\alpha_h) e_h\\
    (1-p)\alpha_h e_h-(\mu_h+\gamma+\delta) i_h\\
    p \alpha_h e_h + \delta i_h-(\mu_h+\gamma) \hslash\\
    \beta_{vh} s_v \frac{i_h+(1-\eta)\hslash}{1-\eta \hslash} - (\mu_v+\alpha_v) e_v\\
    \alpha_v e_v - \mu_v i_v
\end{array}\right).
\end{align*}

For verifying condition (2), note that:

$$F(X,0) = \left(\begin{array}{c}
    \mu_h-\mu_h s_h\\
    -\mu_h r\\
    \mu_v-\mu_v s_v
\end{array}\right).$$

In this case, the equation:

$$\frac{dX}{dt} = F(X,0).$$

Has the solution:

$$X(t) = \left(\begin{array}{c}
    1+e^{-\mu_h}\\
    e^{-\mu_h}\\
    1+e^{-\mu_v}
\end{array}\right)$$

which satisfies that

$$\lim_{t\rightarrow\infty}X(t) = (1,0,1) = X^*.$$

Therefore the disease-free equilibrium $X^*$ is globally asymptotically stable and condition (2) holds. For condition (3), consider the following matrix:

$$A = \left(\begin{array}{ccccc}
    -(\mu_h+\alpha_h) & 0 & 0 & 0 & \beta_{hv}\\
    (1-p)\alpha_h & -(\mu_h+\gamma+\delta) & 0 & 0 & 0\\
    p \alpha_h & \delta & -(\mu_h+\gamma) & 0 & 0\\
    0 & \beta_{vh} & \beta_{vh}(1-\eta) & -(\mu_v+\alpha_v) & 0\\
    0 & 0 & 0 & \alpha_v & -\mu_v
\end{array}\right).$$

Let:

$$\hat{G}(X,Z) = \left(\begin{array}{c}
    \beta_{hv} i_v - \beta_{hv} s_h i_v\\
    0\\
    0\\
    \beta_{vh} (i_h+(1-\eta)\hslash) - \beta_{vh} s_v \frac{i_h+(1-\eta)\hslash}{1-\eta \hslash}\\
    0
\end{array}\right).$$

Note that $s_h\leq 1$ and $\frac{s_h}{1-\eta \hslash}\leq 1$, then $\hat{G}(X,Z)\geq 0$ for all $(X,Z)\in\Omega$, where $\Omega$ is given in Theorem \eqref{eq:thm1}. Then $G(X,Z) = AZ-\hat{G}(X,Z)$ and condition (3) follows.
\end{proof}

\subsection{Sensitivity analysis}
For analyzing the sensitivity of our model, we compute the sensitivity indices of the parameters with respect to $\mathcal{R}_0$ as described in \cite{chitnis2008}. These indices correspond to the partial derivatives of $\mathcal{R}_0$ with respect to each parameter evaluated on the baseline values found in Table \ref{tb:param}.

\begin{table}[!htb]  
    \caption{Parameters for dengue with baseline values, range and references. Unit of time is days.}
     \centering
      \begin{tabular}{|c|c|c|c|} \hline
	  Parameter  & Baseline & Range  & Reference \\ \hline 
	  $\beta_{hv}$  & $0.33$ & $0.10-0.75$ & \cite{manore2014}  \\ 
	  $\beta_{vh}$  & $0.33$ & $0.10-0.75$ & \cite{manore2014} \\ 
	  $1/\alpha_h$ & $5$ & $4-7$  & \cite{manore2014} \\ 
	  $1/\alpha_v$ & $10$ & $7-14$  & \cite{manore2014} \\ 
	  $\delta$     & $0.20$ & $0.10-5$ & \cite{sanchez2018} \\ 
	  $1/\gamma$ & $6$ & $4-12$ & \cite{manore2014} \\ 
	  $p$           & $0.20$  & $0-1$ & estimated \\ 
	  $\eta$        & $0.80$ & $0-1$ & estimated \\ 
	  $1/\mu_h$    & $70$ & $68-76$ & \cite{chitnis2008} \\ 
	  $1/\mu_v$    & $14$ & $8-42$ & \cite{manore2014} \\ \hline
      \end{tabular}
      \label{tb:param}
\end{table}

\begin{table}[htb!]
\centering
\caption{Sensitivity indices for $\mathcal{R}_0$.}
\label{tb:sensi}
\begin{tabular}{|c|c|}\hline
Parameter & Sensitivity index \\
\hline
$\alpha_h$ & $+0.2534$  \\
$\alpha_v$ & $+3.1677$  \\
$\beta_{hv}$  & $+2.3038$  \\
$\beta_{vh}$  & $+2.3038$  \\
$\delta$    & $-1.2037$  \\
$\eta$      & $-0.9352$  \\
$\gamma$    & $-2.8710$  \\
$\mu_h$    & $-6.4188$  \\
$\mu_v$    & $-15.0783$ \\
$p$         & $-0.5732$\\ \hline
\end{tabular}
\end{table}

These indices give us an insight on which parameters affect in a more significant manner the value of $\mathcal{R}_0$. Notice the high negative value of the sensitivity index of the mortality rate of the vector $\mu_v$, which is biologically explained by the fact that as the rate in which infected vectors are replaced by susceptible vectors grows, then the incidence of infected hosts is reduced since there are less infected vectors.

Another important mention correspond to the indices of the hospitalization rate after infection, proportion of individuals being hospitalized, and effectiveness of hospitalization ($\delta, p,\eta$ respectively). Although the values of their indices are relatively low compared to other parameters with negative indices, the parameters themselves can be increased in a more reliable manner. This can be done through educating the population on assisting to medical centers in a more frequent manner (for increasing $p$ and $\delta$) and improving the sanitary conditions of hospitals (for increasing $\eta$). The values of these indices, therefore let us understand how much can the $\mathcal{R}_0$ value be improved by increasing these parameters, and therefore reducing the population infected by dengue.

Another tool to understand the sensibility of the $\mathcal{R}_0$ value is through the global uncertainty quantification described in \cite{manore2014}. This quantification consists in developing an empirical probability distribution for the $\mathcal{R}_0$ value by assuming the parameters follow an uniform distribution in their ranges displayed in Table \ref{tb:param} and then performing random samplings of those parameters and plugging them in the value of $\mathcal{R}_0$. The probability distribution of $\mathcal{R}_0$ obtained after 100,000 samples is displayed in Figure \ref{fig:density}.

\begin{figure}[htb!]
\centering
\includegraphics[width=0.6\textwidth]{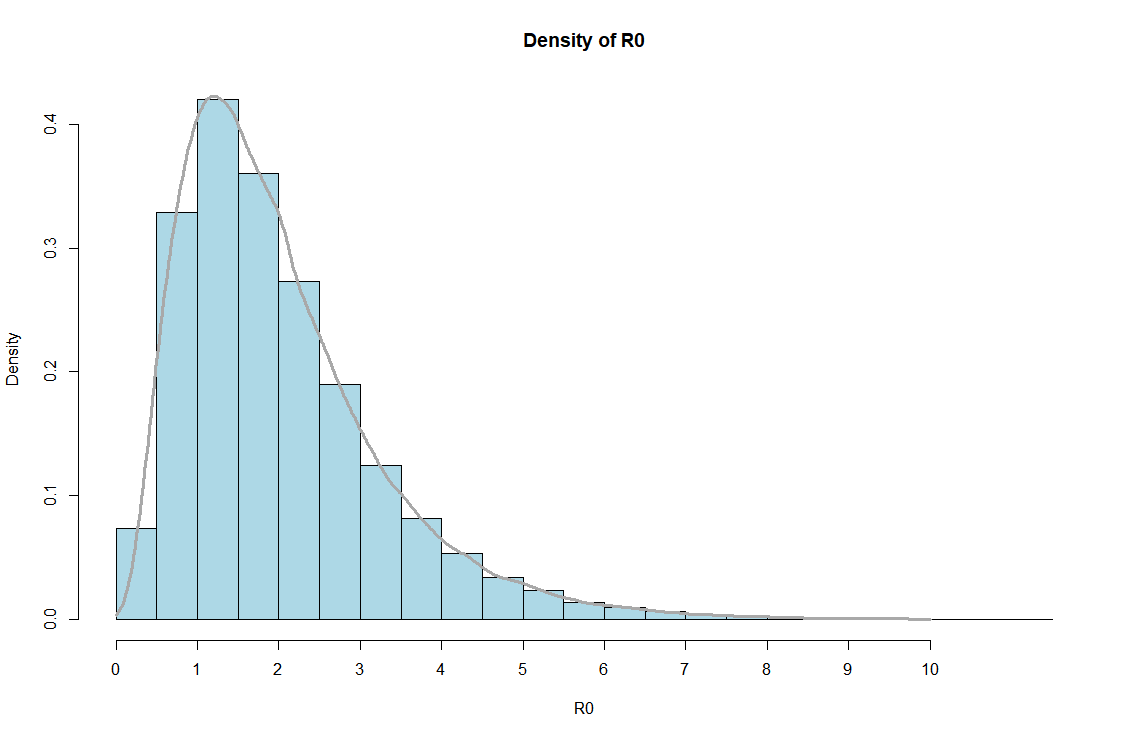}
\caption{Probability distribution of $\mathcal{R}_0$ after 100,000 samples.}
\label{fig:density}
\end{figure}

As suggested by Figure \ref{fig:density}, the most possible range for the $\mathcal{R}_0$ value lies between $0.5$ and $2$, which implies that in the case of $\mathcal{R}_0>1$, it is likely that by performing tweaks in the parameters (in a real context, that is promoting policies that alter in a real population the values of the parameters of the model) we could reduce the value of $\mathcal{R}_0$ closer to $1$ and thus significantly reduce the incidence of dengue in a human population.

\subsection{Numerical scenarios for $\eta$ and $p$} \label{sec:num}
We explored numerical experiments in attempting to find the optimal effectiveness of hospitalization of individuals. However, this is highly correlated with the number of individuals who are hospitalized due to more acute symptoms and therefore diagnosed. We can assume that the hospitalization of individuals is for the most part effective. More specifically, in Costa Rica most hospitals have the adequate equipment and staff to attend these cases.

We explored numerical experiments in attempting to find the optimal effectiveness of hospitalization of individuals. However, this is highly correlated with the number of individuals who are hospitalized. In Costa Rica, depending on the clinical manifestations and different social determinants, patients are usually sent home with basic clinical symptomatic care, recommendations, and schedule appointments in their local health care establishments to monitor evolution. Patients can also be refereed for in-hospital management or require emergency treatment and urgent referral \cite{world2012dengue,guiaccss}. The average of hospital stay among those that do require so, ranges between 3 and 4 days \cite{mallhi2016patients}, with more severe forms requiring longer stays. 

Figures \ref{fig:simIh} and \ref{fig:simH} refer to the evolution in the number of individuals infected non-hospitalized and hospitalized, respectively.

\begin{figure}[htbp]
\centering
\includegraphics[width=0.8\textwidth]{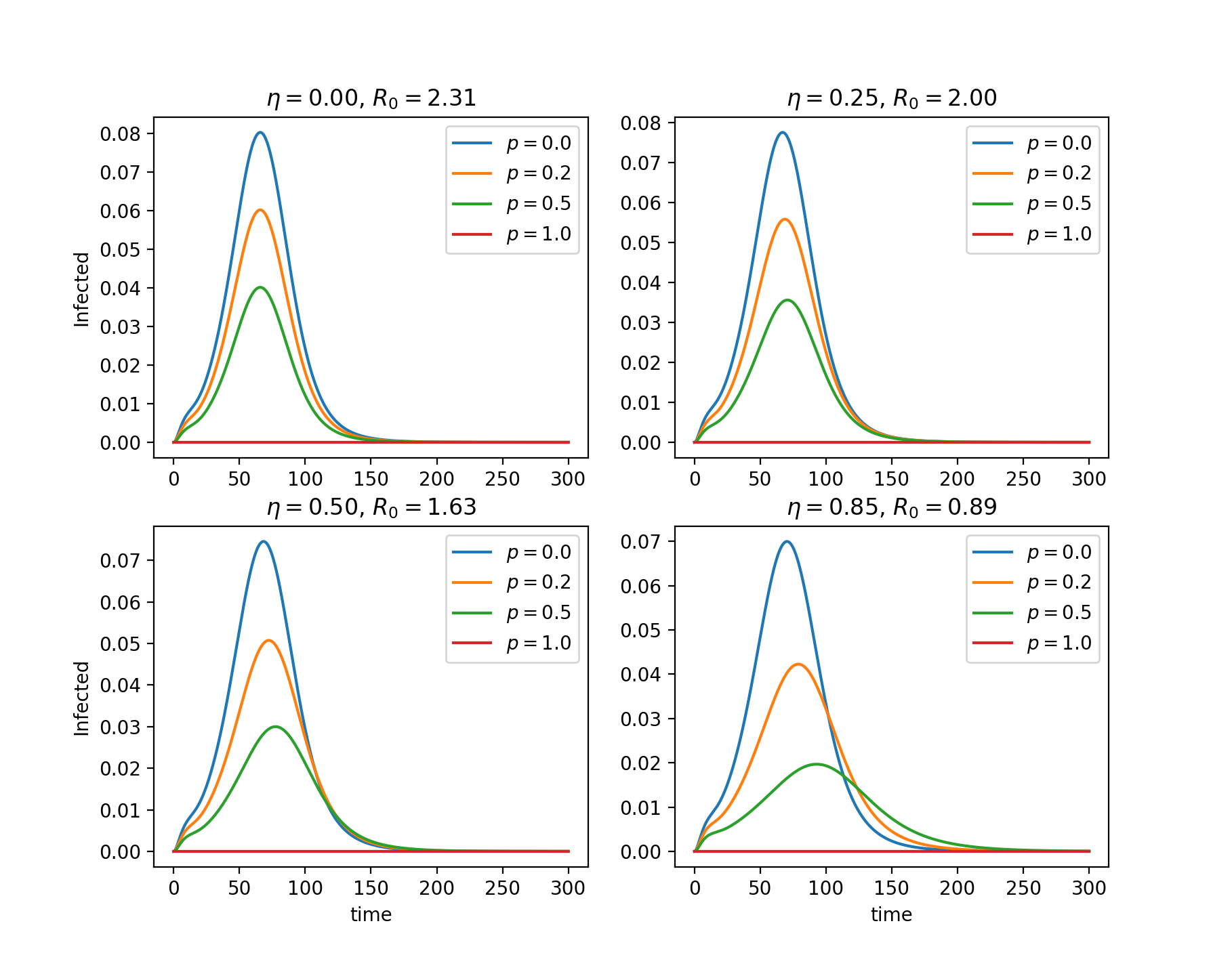}
\caption{Time series solutions of Infected (presumed undiagnosed) individuals, $i_h$. \label{fig:simIh}}
\end{figure} 

\begin{figure}[htbp]
\centering
\includegraphics[width=0.8\textwidth]{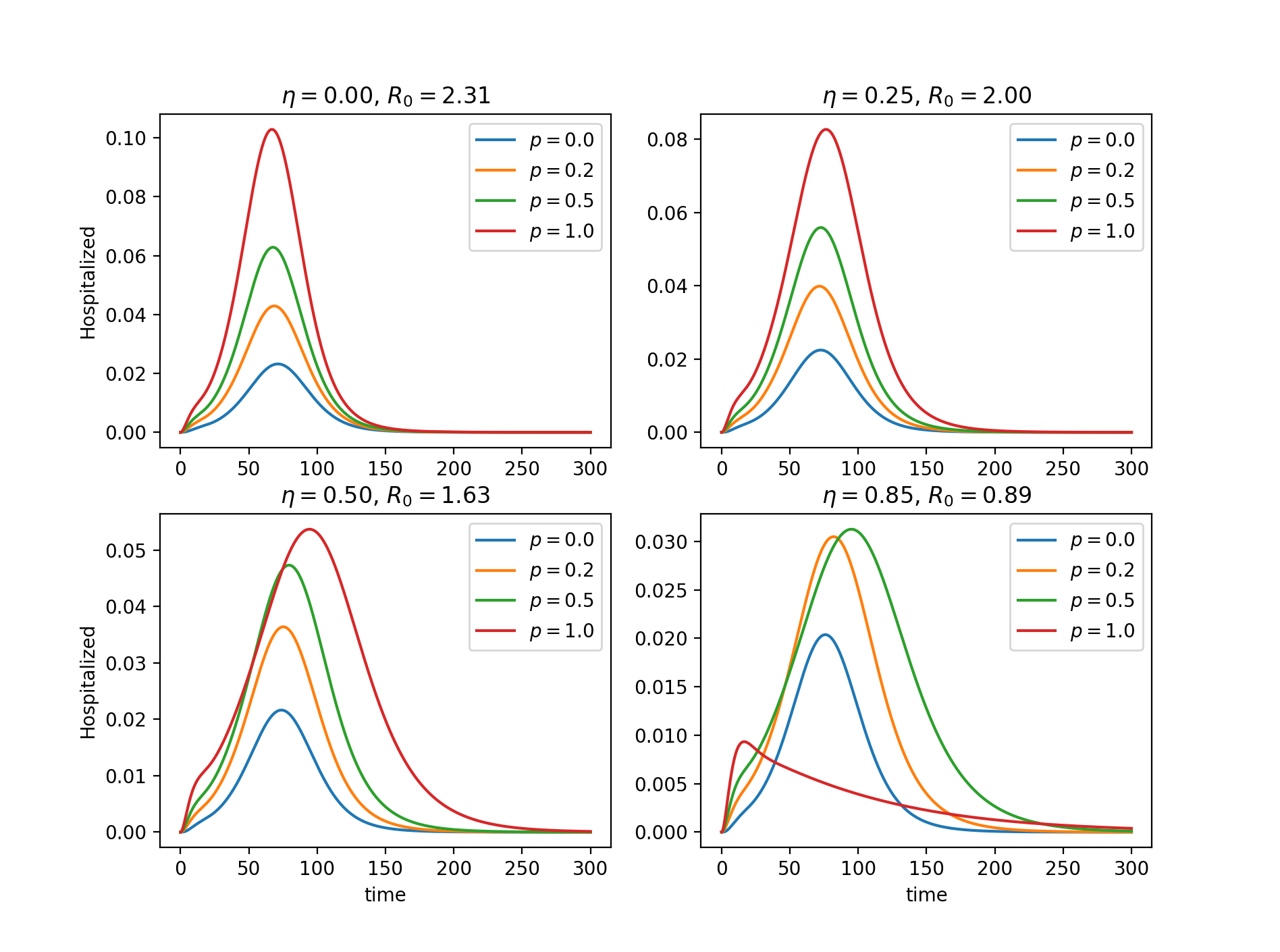}
\caption{Time series solutions of hospitalized (presumed diagnosed) individuals, $\hslash$. \label{fig:simH}}
\end{figure} 

Figure \ref{fig:params} shows how the $\mathcal{R}_0$ value behaves as $p$ and $\eta$ changes, leaving all the other parameters with their baseline values found on Table \ref{tb:param}. Notice that numerical estimations hint us that fixing the value of $p$ and increasing the value of $\eta$ lowers the $\mathcal{R}_0$ value in a faster rate than by fixing the value of $\eta$ and increasing the value of $p$. This finding relates to the difference in the sensitivity indexes found in Table \ref{tb:sensi} and suggests that, based on this model, increasing the effectiveness in hospitalization along with educating patients on preventing methods to minimize infecting others can potentially reduce dengue incidence. 

\begin{figure}[htbp]
\centering
\includegraphics[width=0.8\textwidth]{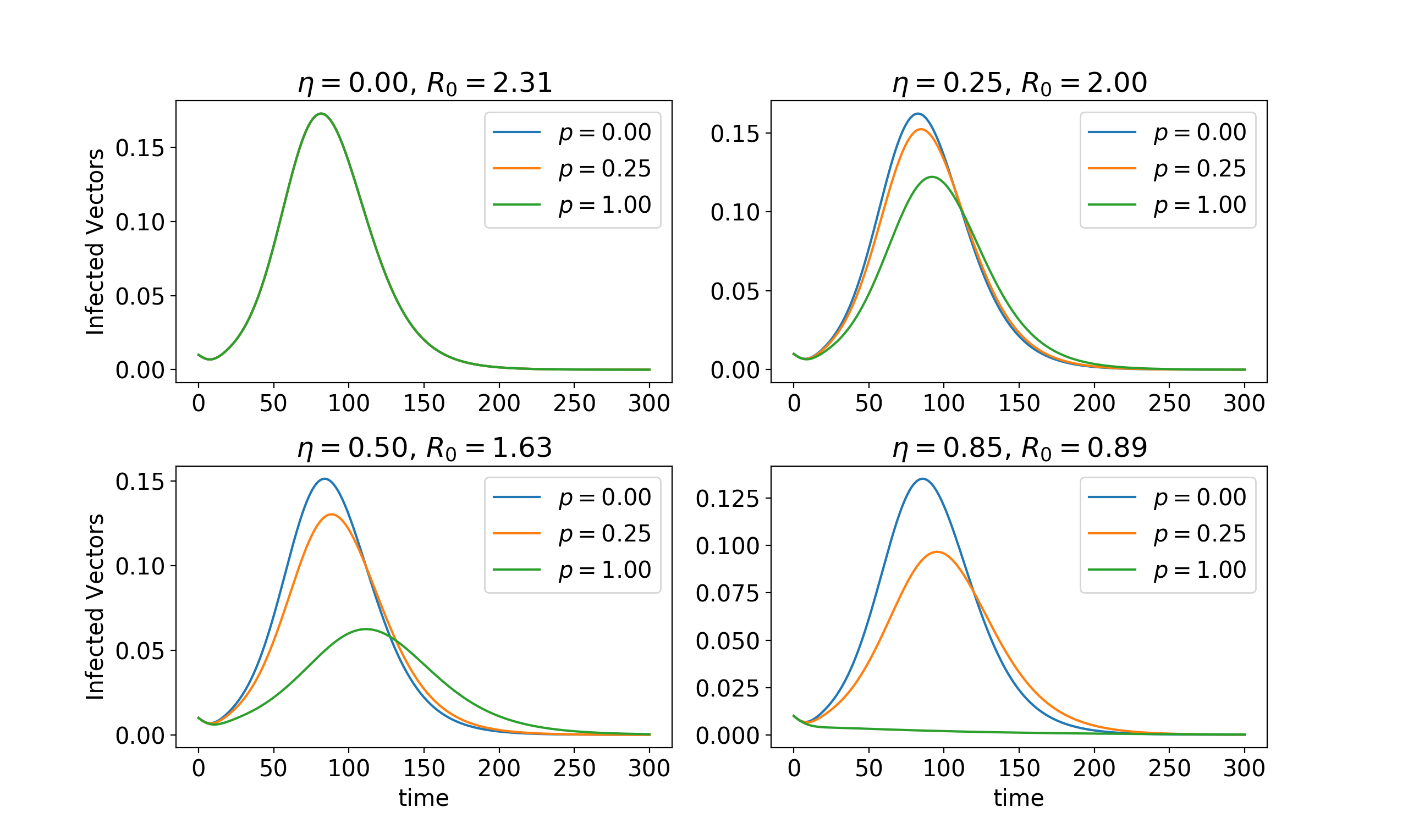}
\caption{Time series solutions of infected vectors, $I_v$. \label{fig:simIv}}
\end{figure} 

\begin{figure}[!ht]
\centering
\subfloat[$\mathcal{R}_0$ versus $p$.\label{fig:p}]{\includegraphics[width=0.5\textwidth]{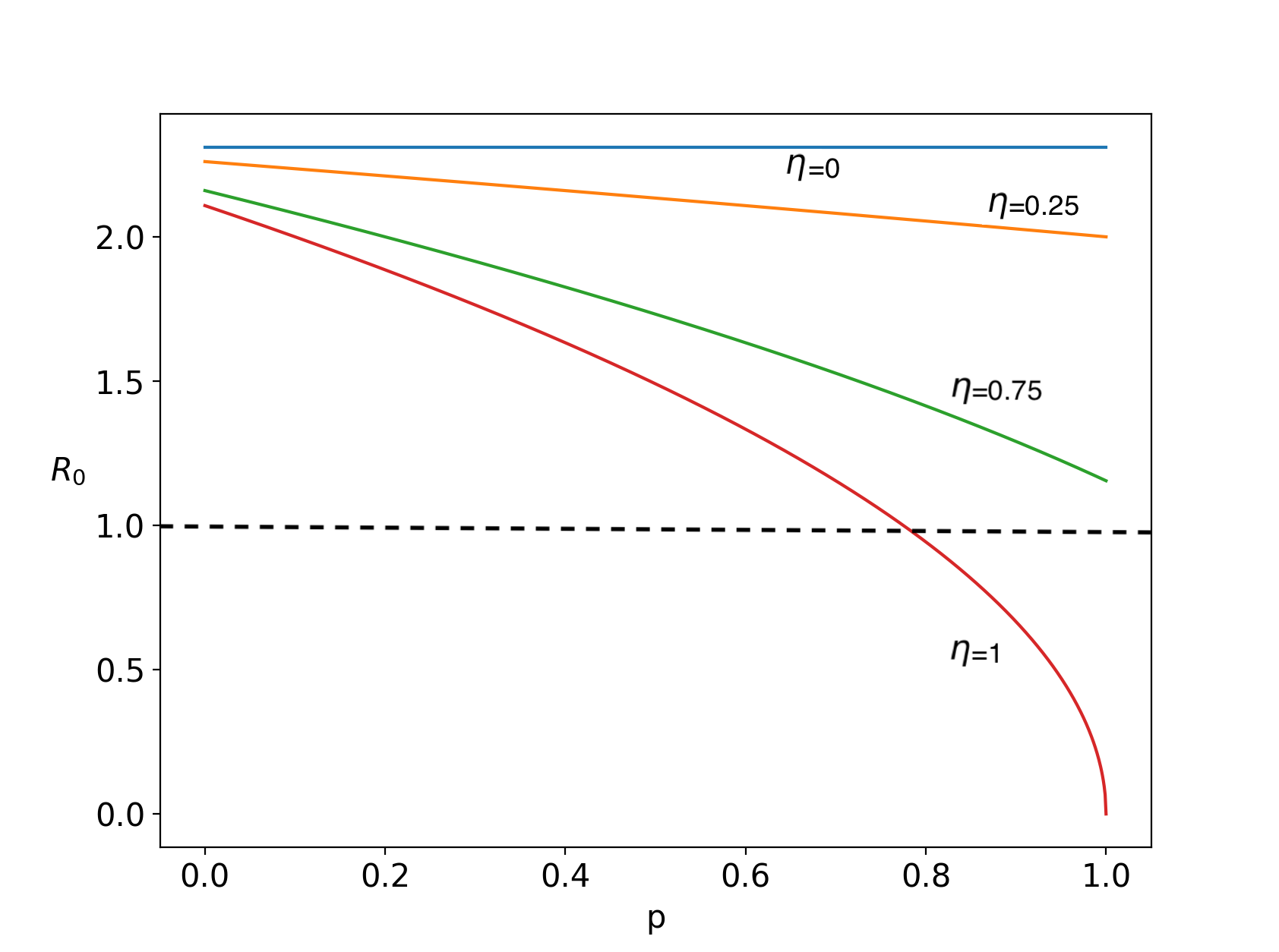}}\hfill
\subfloat[$\mathcal{R}_0$ versus $\eta$.\label{fig:eta}]{\includegraphics[width=0.5\textwidth]{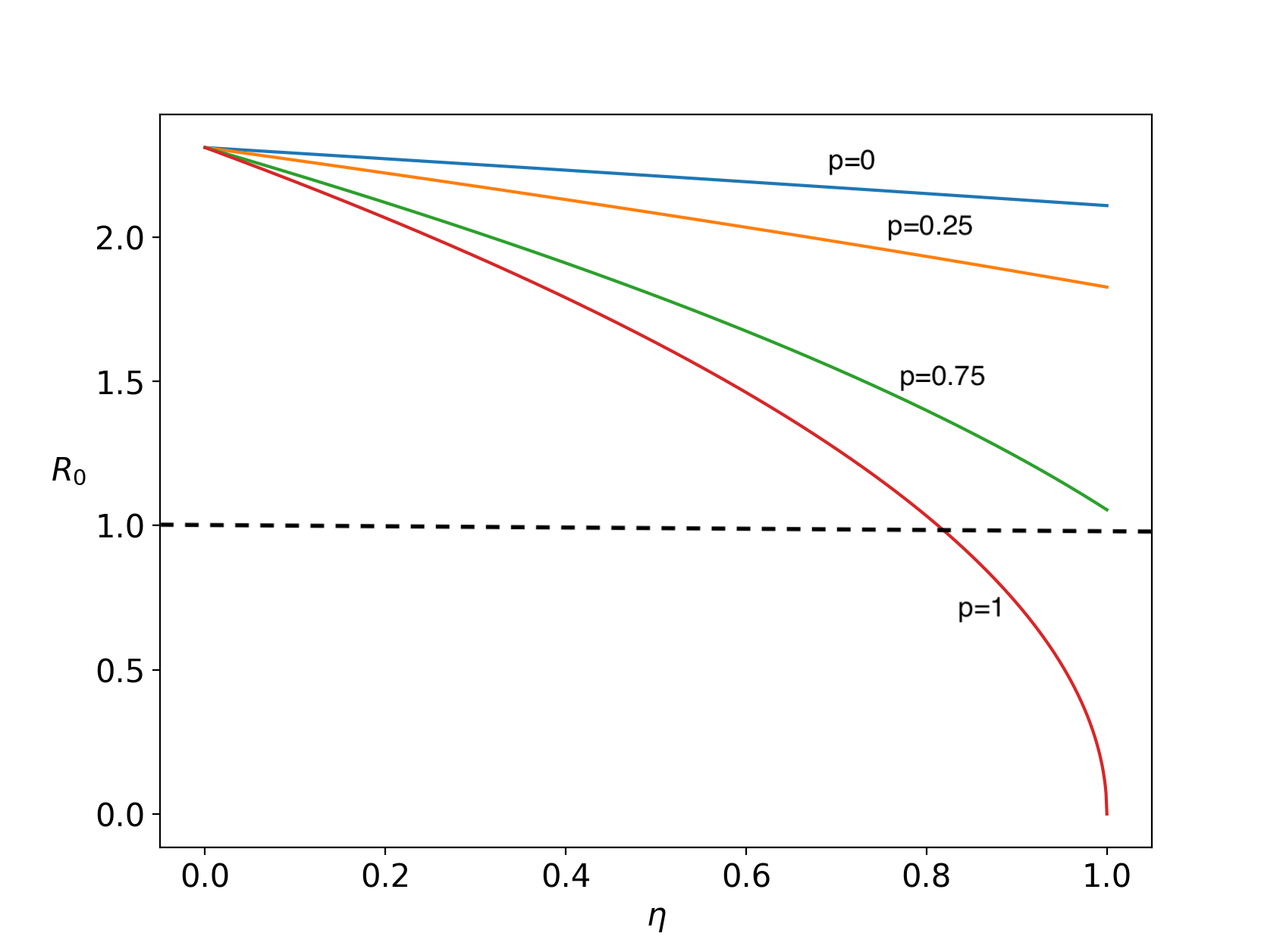}}\hfill
\caption{$\mathcal{R}_0$ versus the probability of hospitalization and effectiveness of hospitalization. \label{fig:params}}
\end{figure} 

\section{Discussion} \label{sec:disc}
In Costa Rica, as in most of the endemic countries, prevention and control strategies have focused on vector control mainly through insecticides targeting at larval or adult mosquitoes \cite{lineamientosdengue, simmons2012}. The country, also follows the recommendations made by the World Health Organization, to promote an strategic approach known as Integrated Vector Management \cite{handbook}. However, despite these efforts its proper implementation has been difficult to achieve and dengue continues to represent a mayor burden to the health care system. 

Based on the results of our model, timely and context-specific dengue contingency plans that involve providing a safer environment for those patients that stay home during their treatments, hence preventing them from propagating the virus, should be one of the priorities among public health officials. Year-round routine activities that involve more active participation from members of the community is one of the strategies that are increasingly being thought to be relevant for a successful control program \cite{romani2007achieving,al2011value,parks2004international}. Continuous capacity building in the communities will allow to reinforce local ownership, where programs adapted to the specific social, economic, environmental and geographic characteristics are a priority \cite{spiegel2005barriers}. These strategies need to go in hand with 
better coordination and communication among institutions so that successful activities of one sector will no be weakened by the lack of commitment from another. 
Also, because the clinical symptoms of dengue are so diverse and the recent emergence of other two arboviruses, each one with similar symptoms but different clinical outcomes, accurate clinical diagnosis is challenging. As such, training of health professional in diagnosis and management in conjunction with laboratory and epidemiological surveillance, is essential \cite{runge2016dengue}. Early accurate notifications of DENV infections will allow health officials to initialized promptly and targeted responses, and continues to highlight the importance and urgent need for the development of specific and sensitive point-of-care test for DENV infections \cite{liles2019evaluation}.  

The intricacies involved in the transmission dynamics of vector-borne viruses, such as the dengue virus, makes interdisciplinary collaboration essential to successfully achieve more efficient prevention and control strategies. As dengue virus continues to spread worldwide, the ever increasing need to develop and apply cost-effective, evidence-based approaches to identify and respond to potential outbreaks, has been one of the central topics from many points of view, including mathematics and medical scientists. As part of this collaboration, mathematical models have proven to be an increasingly valuable tool for the decision making process of public health authorities \cite{keeling2008}.

\section*{Acknowledgements}
We thank the Research Center in Pure and Applied Mathematics and the Mathematics Department at Universidad de Costa Rica for their support during the preparation of this manuscript. The authors gratefully acknowledge institutional support for project B8747 from an UCREA grant from the Vice Rectory for Research at Universidad de Costa Rica.

\end{document}